\documentclass[a4paper,11pt]{article}

\usepackage{fullpage}
\usepackage{times}
\usepackage{soul}
\usepackage{url}
\usepackage[utf8]{inputenc}
\usepackage[small]{caption}
\usepackage{graphicx}
\usepackage{xcolor}
\usepackage{amsmath}
\usepackage{booktabs}
\usepackage{algorithm}
\urlstyle{same}
\usepackage{enumitem}

\usepackage{amsthm}
\usepackage{amssymb}
\usepackage{algpseudocode}
\usepackage{cleveref}

\usepackage{bbm}

\usepackage[textsize=footnotesize,color=green!40]{todonotes}

\newtheorem{theorem}{Theorem}

\newtheorem{lemma}[theorem]{Lemma}

\theoremstyle{definition}

\algnotext{EndFor}
\algnotext{EndIf}
\algnotext{EndWhile}
\algnotext{EndProcedure}

\usepackage{natbib}

\newcommand{\calS}{\mathcal{S}}
\newcommand{\cI}{\mathcal{I}}

\newcommand{\eps}{\varepsilon}
\newcommand{\N}{\mathbb{N}}
\newcommand{\R}{\mathbb{R}}
\newcommand{\pu}{p^{\uparrow}}
\newcommand{\pd}{p^{\downarrow}}
\newcommand{\pl}{p^{\leftarrow}}
\newcommand{\pr}{p^{\rightarrow}}
\newcommand{\ppu}{p'^{\uparrow}}
\newcommand{\ppd}{p'^{\downarrow}}
\newcommand{\ppl}{p'^{\leftarrow}}
\newcommand{\ppr}{p'^{\rightarrow}}
\newcommand{\cB}{\mathcal{B}}
\newcommand{\Vol}{\mathrm{Vol}}

\allowdisplaybreaks

\usepackage{authblk}

\title{\bf On Maximum Bipartite Matching with Separation}

\author[1]{Pasin Manurangsi}
\author[2]{Erel Segal-Halevi}
\author[3]{Warut Suksompong}

\affil[1]{Google Research}
\affil[2]{Ariel University}
\affil[3]{National University of Singapore}

\date{\vspace{-10mm}}

\newcommand{\nodecircle}[1]{
\node [shape=circle,minimum size=5pt,inner sep=0, outer sep=0,fill=black,label=#1]
}

\begin{document}

\maketitle

\begin{abstract}
Maximum bipartite matching is a fundamental algorithmic problem which can be solved in polynomial time.
We consider a natural variant in which there is a separation constraint: the vertices on one side lie on a path or a grid, and two vertices that are close to each other are not allowed to be matched simultaneously.
We show that the problem is hard to approximate even for paths, and provide constant-factor approximation algorithms for both paths and grids.
\end{abstract}

\section{Introduction}
\label{sec:intro}

In the \emph{maximum bipartite matching} problem, we are given a bipartite graph, and our goal is to compute a matching of maximum size in the graph \citep[Sec.~26.3]{CormenLeRi09}.
This is a fundamental algorithmic problem with myriad applications. 
For instance, one may wish to assign meeting rooms in a company to groups of employees for use at a certain hour---an edge between a group and a meeting room indicates that the room is suitable for that group, and it is desirable for as many groups as possible to be assigned. 
Other examples include the allocation of students to time slots for a sports lesson---here, an edge specifies that the corresponding student is available during the associated time slot---or spectators to seats in a theater---an edge between a spectator and a seat means that the spectator finds the seat acceptable.
It is well-known that maximum bipartite matching can be solved in polynomial time \citep{HK73}.

In this paper, we consider a variant of maximum bipartite matching in which one side of the graph admits a spatial or temporal structure and there is a \emph{separation constraint}, meaning that two vertices that are close to each other with respect to the structure should not be matched simultaneously.
Separation constraints are realistic in the aforementioned examples.
Indeed, one may want to avoid assigning neighboring rooms in order to prevent noise disturbance, or placing spectators in adjacent seats due to social distancing requirements.
Similarly, a sports teacher may need to take a break between lessons to clean the facility or refresh herself.
Despite being a natural variation of classic bipartite matching, this problem has not been previously studied to our knowledge.\footnote{Somewhat relatedly, \citet{ElkindSeSu21-Graph,ElkindSeSu21-Land,ElkindSeSu22} investigated the allocation of \emph{divisible} (i.e., non-discrete) resources under separation.}
Can we still compute a maximum matching in polynomial time even in the presence of separation constraints?

\section{Preliminaries}

Let $G = (A,B,E)$ be a bipartite graph, where $A$ and $B$ are the two independent sets of vertices and $E$~is the set of edges; in particular, each edge connects a vertex from~$A$ with a vertex from~$B$.
Let $a = |A|$ and $b = |B|$.
We will assume in \Cref{sec:1D} that the vertices in~$B$ form a one-dimensional structure, and in \Cref{sec:2D} that they form a two-dimensional structure.

We now list some tools from prior work that will be useful for establishing our results.
For any positive integer~$t$, denote by $[t]$ the set $\{1,2,\dots,t\}$.

\paragraph{Maximum $k$-Set Packing.} In the \emph{Maximum $k$-Set Packing} problem, we are given a universe~$U$ along with subsets $S_1, \dots, S_M \subseteq U$, each of size $k$. The goal is to find the maximum number of (pairwise) disjoint subsets.

\begin{theorem}[\cite{Cygan13,furer2014approximating}] \label{thm:set-packing}
For any constants $\eps > 0$ and $k \in \N$ where $k \geq 2$, there is a polynomial-time $(\frac{k + 1}{3} + \eps)$-approximation algorithm for Maximum $k$-Set Packing.
\end{theorem}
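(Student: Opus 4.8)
The plan is to recast Maximum $k$-Set Packing as a maximum independent set problem in its \emph{conflict graph} $H$, whose vertices are the input sets $S_1,\dots,S_M$ and whose edges join pairs of intersecting sets. Because every set has exactly $k$ elements, any family of pairwise disjoint sets that all meet a fixed set $S$ has size at most $k$ (each must occupy a distinct element of $S$); equivalently, the neighborhood of every vertex of $H$ has independence number at most $k$, so $H$ is $(k+1)$-claw-free. A packing is exactly an independent set in $H$, so it suffices to produce a $(\frac{k+1}{3}+\eps)$-approximation for maximum independent set in $(k+1)$-claw-free graphs.

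The algorithm is local search. First I would start from any maximal independent set $A$ and repeatedly look for an \emph{improving move}: an independent set $X$ of vertices outside $A$ whose neighborhood inside $A$ is strictly smaller than $X$, so that replacing $N(X)\cap A$ by $X$ strictly increases $|A|$. The elementary version, which allows only constant-size $X$, yields the classical Hurkens--Schrijver locality gap of $\frac{k}{2}+\eps$. To reach $\frac{k+1}{3}+\eps$ one must allow improving moves of super-constant size while keeping the search polynomial. Following the approach of the cited works, I would restrict attention to improving sets whose interaction structure with $A$ has bounded pathwidth and size $O(\log M)$, and then find such a set (if one exists) in polynomial time via color coding: randomly color the relevant ground structure with $O(\log M)$ colors, dynamic-program over \emph{colorful} bounded-pathwidth improvements, and derandomize using a standard perfect-hash family.

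The heart of the argument, and the step I expect to be the main obstacle, is the \emph{locality-gap analysis}: showing that if $A$ admits no improving move of the allowed (bounded-pathwidth, logarithmic-size) type, then $|A| \ge \frac{3}{(k+1)(1+\eps)}\,|\opt|$. The plan is to examine the bipartite exchange structure between $A$ and an optimal packing $O$, charging members of $O$ to members of $A$ through shared ground-set elements, and to prove a dichotomy: if the ratio $|O|/|A|$ exceeded $\frac{k+1}{3}$, then within this structure there would necessarily exist a bounded-pathwidth region witnessing a net-positive swap, contradicting local optimality. This is precisely where the improvement from $\frac{k}{2}$ to $\frac{k+1}{3}$ is extracted, and the bounded-pathwidth restriction is what makes the witnessing improvement both guaranteed to exist and efficiently discoverable; getting this combinatorial accounting tight is the delicate part.

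Finally, I would bound the running time by noting that each improving move increases $|A|$ by at least one while $|A|\le M$, so there are at most $M$ iterations, each invoking the polynomial-time color-coding search. Combining the locality gap with this iteration bound gives the claimed polynomial-time $(\frac{k+1}{3}+\eps)$-approximation.
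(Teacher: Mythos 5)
This theorem is not proved in the paper at all: it is imported verbatim from \cite{Cygan13,furer2014approximating} and used as a black box (in the $(4/3+\eps)$-approximation for $s=1$, via the case $k=3$). So there is no in-paper proof to compare against; the relevant question is whether your sketch would stand on its own as a proof of the cited result, and it would not yet. Your framing is accurate: the reduction to Maximum Independent Set in the conflict graph is correct (a $k$-set meets at most $k$ pairwise-disjoint $k$-sets, one per element, so the conflict graph is $(k+1)$-claw-free), the identification of Hurkens--Schrijver as the constant-size-swap baseline at $k/2+\eps$ is right, and bounded-pathwidth swaps of size $O(\log M)$ found by color coding over a perfect hash family is indeed the mechanism of Cygan and of F\"urer--Yu. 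But the entire technical content of the theorem is the locality-gap lemma --- that a solution admitting no improving swap from the restricted (logarithmic-size, bounded-pathwidth) family is already within a factor $\frac{k+1}{3}+\eps$ of optimum --- and your proposal states this as a goal (``the plan is to examine the bipartite exchange structure \dots and to prove a dichotomy'') rather than carrying it out. In particular, you give no argument for why a ratio exceeding $\frac{k+1}{3}$ forces the existence of an improving set that is simultaneously small \emph{and} of bounded pathwidth; that double guarantee is exactly what distinguishes the $(k+1)/3$ bound from the easier $(k+2)/3$ bound of Sviridenko--Ward obtained from logarithmic-size swaps alone, and it occupies the bulk of the cited papers.

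Two smaller points. First, ``randomly color the relevant ground structure'' elides what is actually colored (the candidate vertices outside the current solution, so that a colorful bounded-pathwidth improvement can be found by dynamic programming over a path decomposition); as written it is too vague to verify. Second, for the purposes of this paper you should simply cite the result, as the authors do --- reproving it is neither needed nor feasible in a paragraph. If you do want a self-contained statement you can fully justify, the Hurkens--Schrijver $\bigl(\frac{k}{2}+\eps\bigr)$-approximation via constant-size swaps is provable in a page, at the cost of a worse constant (e.g.\ $3/2+\eps$ instead of $4/3+\eps$ for $k=3$).
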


We will also use the following APX-hardness of Maximum $3$-Set Packing with bounded degree.\footnote{We remark that \Cref{lem:3sp-hardness} is a restatement of Lemma~4.4 of \cite{GuptaLLM019}, which is stated in terms of partitioning a degree-$4$ graph into vertex-disjoint triangles. The Maximum $3$-Set Packing instance can be created by viewing the vertex set as the universe $U$ and each triangle as a set of three vertices. Since the graph has degree $4$, each element appears in at most ${4 \choose 2} = 6$ subsets.} 
We say that an instance has degree $d$ if each element appears in at most $d$ subsets.
Denote by $\calS$ the collection of the subsets $S_1,\dots,S_M$.

\begin{theorem}[e.g., \cite{GuptaLLM019}]
\label{lem:3sp-hardness}
For some constant $\zeta \in (0, 1/2)$, 
the following problem is NP-hard: 
Given a Maximum $3$-Set Packing instance $(U, \calS)$ with degree $6$, distinguish between\footnote{Note that we can assume without loss of generality that $M = |\calS| \ge |U|/3$ and that $|U|$ is a multiple of~$3$, because we are clearly in the NO case if $M < |U|/3$ or if $|U|$ is not a multiple of $3$.}
\begin{itemize}
\item (YES) There exist $|U|/3$ disjoint subsets;
\item (NO) No $(1 - \zeta)|U|/3$ subsets are disjoint.
\end{itemize}
\end{theorem}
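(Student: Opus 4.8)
The plan is to give a direct reduction from the problem of partitioning a bounded-degree graph into vertex-disjoint triangles, whose gap version is NP-hard by Lemma~4.4 of \cite{GuptaLLM019}: for some constant $\zeta \in (0, 1/2)$, given a graph $G = (V,E)$ of maximum degree~$4$, it is NP-hard to distinguish whether $V$ can be partitioned into $|V|/3$ vertex-disjoint triangles (the YES case) from whether no collection of vertex-disjoint triangles covers more than $(1-\zeta)|V|$ vertices (the NO case). I would take this as the starting point and transform each such graph instance into a Maximum $3$-Set Packing instance.

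The reduction is the natural one. Given $G=(V,E)$, I set the universe to be $U := V$, and for every triangle $\{u,v,w\}$ in $G$ I introduce the $3$-element subset $\{u,v,w\}$ into the collection $\calS$; thus every set has size exactly~$3$, as required. The first thing to verify is the degree bound: a fixed vertex $v$ has at most $4$ neighbors, and any triangle through $v$ is determined by choosing two of those neighbors that are themselves adjacent, so $v$ lies in at most $\binom{4}{2} = 6$ triangles. Hence each element of $U$ appears in at most~$6$ subsets, giving an instance of degree~$6$. Since $G$ has bounded degree it contains only $O(|V|)$ triangles, which can be enumerated efficiently, so the map is computable in polynomial time.

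The correctness of the gap then follows from a one-to-one correspondence between collections of vertex-disjoint triangles in $G$ and collections of pairwise-disjoint subsets in $\calS$: a family of sets is pairwise disjoint exactly when the corresponding triangles are vertex-disjoint, and each such set/triangle covers precisely three elements/vertices. Consequently, a partition of $V$ into $|V|/3$ triangles yields $|U|/3$ disjoint subsets (the YES case), while if no disjoint triangles cover more than $(1-\zeta)|V|$ vertices then no $(1-\zeta)|U|/3$ subsets can be disjoint (the NO case). I do not expect any genuine obstacle here; the only points that require care are the verification of the degree bound above and the bookkeeping assumptions recorded in the statement's footnote, namely that we may take $|U|$ to be a multiple of~$3$ and $M = |\calS| \ge |U|/3$, since otherwise we are trivially in the NO case. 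This completes the reduction and hence establishes the claimed hardness.
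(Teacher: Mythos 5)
Your proposal is correct and matches the paper's own justification, which appears in a footnote: \Cref{lem:3sp-hardness} is obtained from Lemma~4.4 of \cite{GuptaLLM019} (gap hardness of partitioning a degree-$4$ graph into vertex-disjoint triangles) by taking the vertex set as the universe and each triangle as a $3$-element set, with the degree bound $\binom{4}{2}=6$. Your reduction, degree argument, and gap correspondence are exactly the ones the paper intends.
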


\paragraph{Maximum Independent Sets and $d$-claw-free graphs.} 
Recall that an \emph{independent set} in a graph is a set of vertices such that no two vertices are connected by an edge. 
Finding a maximum-cardinality independent set is a well-known NP-hard problem, which also cannot be approximated to a constant factor unless P $=$ NP \citep{Zuckerman07}. 
However, it admits polynomial-time constant-factor approximation algorithms for a class of graphs called \emph{$d$-claw-free graphs}, which we define next.

In a graph $H = (V_H, E_H)$, a \emph{$d$-claw} consists of a \emph{center} $u \in V_H$ and $d$ \emph{talons} $v_1, \dots, v_d \in V_H$ such that there is an edge between the center and each talon but there is no edge between any pair of talons. 
(That is, $\{u, v_i\} \in E_H$ for all $i \in [d]$ but $\{v_i, v_j\} \notin E_H$ for all $i, j \in [d]$.) 
A graph is \emph{$d$-claw-free} if it does not contain a $d$-claw as an induced subgraph.

\begin{theorem}[\cite{Neuwohner21}] \label{thm:ind}
There exists a constant $\eps > 0$ such that, for any constant $d \in \N$ where $d \geq 4$, there is a polynomial-time $(d/2 - \eps)$-approximation algorithm for Maximum Independent Set in $d$-claw-free graphs.\footnote{A \emph{quasi-polynomial}-time algorithm by \citet{cygan2013sell} finds, for any $\eps>0$, a $(d/3+\eps)$-approximation for Maximum Independent Set in $d$-claw-free graphs.}
\end{theorem}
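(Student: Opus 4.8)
The plan is to use \emph{local search}, the standard engine behind the known approximation algorithms for independent set in $d$-claw-free graphs. Since the statement concerns only the cardinality (unweighted) version and asks merely for a ratio a constant below $d/2$, I would aim for the classical Hurkens--Schrijver guarantee, which in fact yields the stronger bound $(d-1)/2 + \delta$. The algorithm maintains an independent set $I$, initialized arbitrarily (e.g.\ to the empty set), and repeatedly applies \emph{local improvements} until none remain.

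I would define a \emph{$t$-improvement} to be an independent set $T$ of at most $t$ vertices, disjoint from $I$, whose set of neighbors in $I$ has size strictly smaller than $|T|$. Applying it means replacing $I$ by $(I \setminus N(T)) \cup T$; this set is again independent (we deleted every $I$-neighbor of $T$, and $T$ itself is independent) and its cardinality strictly increases. For constant $t$ and constant $d$, a single search over all candidate sets $T$ takes $n^{O(t)}$ time, and since $|I| \le n$ grows by at least one each step, the process halts after at most $n$ improvements; choosing $t = t(d,\delta)$ to be a suitable constant keeps everything polynomial.

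The heart of the argument is the charging analysis of a locally optimal $I$ against a global optimum $O$. Two facts drive it. First, $d$-claw-freeness forces $|N(v) \cap O| \le d-1$ for every $v \in I$: the set $N(v) \cap O$ is independent (being a subset of $O$) and adjacent to $v$, so $d$ such vertices would exhibit a $d$-claw; summing over $v$ bounds the number of $I$--$O$ edges by $(d-1)|I|$. Second, local optimality restricts how many optimum vertices can be ``lightly covered'': no $u \in O$ has zero $I$-neighbors (else $\{u\}$ is a $1$-improvement), and no vertex of $I$ can be the unique $I$-neighbor of two distinct optimum vertices (else that pair forms a $2$-improvement). Writing $O_1$ for the optimum vertices with exactly one $I$-neighbor and $O_2$ for the rest, the second fact gives $|O_1| \le |I|$, while the edge count gives $|O_1| + 2|O_2| \le (d-1)|I|$; combining the two yields $|O| = |O_1| + |O_2| \le (d/2)\,|I|$.

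The main obstacle is precisely the final step \emph{below} $d/2$: the clean $2$-improvement analysis above saturates at the ratio $d/2$, and shaving off a positive constant requires ruling out longer alternating chains of single-neighbor optimum vertices by allowing improvements of larger constant size $t$. This is the technical core of Hurkens--Schrijver, and letting $t \to \infty$ drives the ratio to $(d-1)/2$; fixing, say, $\delta = 1/4$ gives $(d-1)/2 + 1/4 = d/2 - 1/4$, so that $\eps = 1/4$ serves as the universal constant demanded by the statement. I would treat the stronger results as a black box: Neuwohner's substantially more intricate potential-function and color-based improvements are needed only to beat $(d-1)/2$ or to accommodate vertex weights, neither of which the cardinality bound stated here requires.
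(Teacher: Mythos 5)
The paper offers no proof of this statement at all: it is imported as a black-box citation to \citet{Neuwohner21}, whose result concerns the \emph{weighted} independent set problem in $d$-claw-free graphs. Your route---plain Hurkens--Schrijver local search for the cardinality version---is therefore a genuinely different and more elementary one, and it does suffice for the theorem as literally stated. The abstract Hurkens--Schrijver lemma applies to any set system in which each element lies in at most $d-1$ sets and every collection of at most $t$ sets covers at least as many elements as its cardinality; these are exactly the conditions supplied by your bound $|N(v)\cap O|\le d-1$ (from $d$-claw-freeness) and by $t$-local optimality, and the resulting ratio $(d-1)/2+\delta$ sits below $d/2-\eps$ uniformly in $d$ for, say, $\delta=\eps=1/4$. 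Two caveats. First, the portion of the argument you actually execute---the $1$- and $2$-improvement charging---saturates at exactly $d/2$ and yields no $\eps$; everything strictly below $d/2$ is delegated to Hurkens--Schrijver as a black box, so you have traded one citation for another (older and simpler, but still unproved here) rather than produced a self-contained proof. Second, your substitute covers only the unweighted case, whereas the paper later relies on the weighted version of this theorem in its ``Maximum-weight matching'' extension of \Cref{thm:one-dim-better-than-two}; Hurkens--Schrijver does not handle weights (weighted local search is not even known to terminate in polynomial time without scaling, and the naive guarantee degrades badly), so for that extension the appeal to \citet{Neuwohner21} remains necessary even if your proof is adopted for the theorem as stated.
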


\section{One Dimension}
\label{sec:1D}

In this section, we consider the one-dimensional setting and assume that the vertices in $B$ lie on a path.
For convenience, we will sometimes associate the vertices in~$B$ with the numbers in the set $[b] = \{1, \dots, b\}$.
There is an integer \emph{separation parameter}~$s\ge 0$; no two vertices in $B$ with fewer than $s$ vertices between them are allowed to be matched simultaneously.
We call the resulting problem \emph{bipartite matching with separation}.
Note that canonical bipartite matching corresponds to the case $s = 0$.

\subsection{Hardness}
Our first result is perhaps surprising: for any $s\ge 1$, deciding whether there exists a matching that involves all vertices in $A$, let alone finding a maximum matching, is already computationally difficult.
In fact, we can even obtain an inapproximability result.

\begin{theorem}
\label{thm:1D-NP}
In the one-dimensional setting, for any fixed $s\ge 1$, there exists a constant $\varepsilon > 0$ such that distinguishing between the following two cases is NP-hard:
\begin{itemize}
\item (YES) There exists a matching of size $a$.
\item (NO) Every matching has size less than $(1-\varepsilon)a$.
\end{itemize}

\end{theorem}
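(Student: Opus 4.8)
The plan is to reduce from Maximum $3$-Set Packing with degree $6$ (\Cref{lem:3sp-hardness}), transporting its YES/NO gap into the matching problem. Recall that in the YES case there is a partition of $U$ into $|U|/3$ disjoint sets, while in the NO case no $(1-\zeta)|U|/3$ sets are disjoint. Given an instance $(U,\calS)$, I would build a bipartite graph whose $A$-side consists of one \emph{element vertex} $\alpha_u$ for each $u \in U$ (so that the target matching size $a$ tracks how many elements get ``covered'') together with a linear number of auxiliary \emph{dummy vertices} used by the gadgets below. The $B$-side consists of one \emph{incidence vertex} $b_i^u$ for each pair $u \in S_i$, with the edge $\{\alpha_u, b_i^u\}$, plus gadget vertices; all of $B$ is laid out along the path. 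A key preprocessing step is to interleave degree-$0$ \emph{spacer} vertices between the meaningful vertices: since two matched $B$-vertices conflict exactly when their path-distance is at most $s$, inserting spacers lets me realize, for any \emph{fixed} $s \ge 1$, whatever pattern of pairwise conflicts I want among a bounded set of nearby vertices, and reduces the geometry to a clean ``at most one matched per window'' slot model independent of~$s$.

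The intended correspondence is that \emph{matching all of $A$} should be equivalent to \emph{selecting a sub-collection of disjoint sets that covers $U$}, i.e.\ an exact cover / perfect $3$-set packing. Matching $\alpha_u$ to $b_i^u$ is read as ``$u$ is served by $S_i$.'' The disjointness of the served sets is automatic, since each $\alpha_u$ is matched at most once, so an element cannot be served by two sets at once. The content I must supply is \emph{completeness}: I would build, for each set $S_i = \{u,v,w\}$, a consistency gadget forcing that if one of $b_i^u, b_i^v, b_i^w$ is used then all three are, so that each served set is served in full. If this holds, the served sets are complete and pairwise disjoint, hence form a valid packing, and the number of served (matched) elements equals $3$ times the number of served sets.

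Granting the gadget, the two directions are routine. For YES, a partition of $U$ yields a matching of size $a$ by serving each element through its assigned set and matching all dummies in their ``on'' configuration. For NO, in \emph{any} matching the served sets form a packing, so the number of matched element vertices is at most $3$ times the maximum packing size, which is strictly less than $3 \cdot (1-\zeta)|U|/3 = (1-\zeta)|U|$. Because the instance has degree $6$, we have $M = O(|U|)$ and each gadget contributes only $O(1)$ vertices, so $|A|$ is linear in $|U|$; hence the fraction of $A$ that must be left unmatched in the NO case is a fixed constant, yielding the claimed $\varepsilon = \Theta(\zeta)$.

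The hard part will be the consistency gadget, and the subtlety is structural: separation constraints are ``at most one per window,'' hence \emph{downward closed}, so any sub-matching of a valid matching is valid. Consequently one cannot naively forbid ``use exactly one or two incidences of $S_i$'' while permitting ``use all three,'' as the partial configuration is a subset of the full one. The gadget must therefore leverage the fact that in the YES analysis \emph{every} vertex of $A$ is matched: I would attach forced dummy vertices whose only matches are gadget vertices placed (via spacers) so that a partial use of $S_i$ provably collides with some forced dummy's match, making a full matching impossible unless each set is served all-or-nothing, while still admitting both the all-on and all-off states so that unselected sets cause no obstruction. The delicate point is to design this so that (i) the YES matching remains perfect, and (ii) the all-or-nothing accounting remains valid even for the \emph{partial} matchings arising in the NO case, where an adversary could try to trade forced dummies for extra served elements; ruling out such trades, while keeping the construction of linear size, is the technical heart of the proof.
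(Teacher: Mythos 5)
Your high-level plan---reduce from degree-$6$ Maximum $3$-Set Packing (\Cref{lem:3sp-hardness}), put element vertices on the $A$-side, incidence/gadget vertices on the $B$-side with spacers, and calibrate dummy vertices so that a perfect matching of $A$ corresponds to a perfect packing---is the same skeleton as the paper's reduction. But the proposal has a genuine gap: the entire soundness argument rests on a ``consistency gadget'' forcing each set to be served all-or-nothing, and you never construct it; you explicitly defer it as ``the technical heart of the proof.'' Without it, nothing in your NO-case analysis goes through: if sets may be partially served, the served sets need not be disjoint (e.g.\ $u$ served by $S_1$ and $x$ served by $S_2$ with $S_1\cap S_2\neq\emptyset$), so ``the served sets form a packing'' fails, and the bound of $3\cdot(\text{max packing size})$ on the number of matched element vertices does not follow. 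Worse, the obstruction you yourself identify---separation constraints are downward closed, so any sub-matching of a feasible matching is feasible---is a real reason to doubt that a gadget of the exact form you describe (``if one incidence is used then all three are'') can exist at all; a proof cannot lean on a gadget whose existence is in question.

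The paper sidesteps the need for such a gadget entirely, and this is the idea you are missing. Each set $S_i$ gets a \emph{block} of $3(s+1)$ consecutive path vertices, so the separation constraint alone caps each block at $3$ matched vertices; moreover, the only way to reach $3$ in a block is to use the three specific positions $3(s+1)(i-1)+1$, $+s+2$, $+2s+3$, and those are wired only to the element vertices of $S_i$, while two other positions (reaching at most $2$ matched) are wired to $k=2(M-u/3)$ dummies. Soundness is then pure counting: a matching of size at least $(1-\eps)a$ with $a=2M+u/3$ forces at least $(1-\zeta)u/3$ blocks to be at capacity $3$, and each such block consumes the three element vertices of its set, which (since each element vertex is matched once) yields that many pairwise disjoint sets. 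No all-or-nothing enforcement is needed---blocks with $1$ or $2$ matched vertices are simply absorbed by the counting. To repair your proof you would need either to supply a working gadget (and show it survives the partial-matching adversary you describe) or to replace the gadget by a capacity-plus-counting argument of this kind.
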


\begin{proof}
The proof is by reduction from Maximum $3$-Set Packing.
Given a Maximum $3$-Set Packing instance $(U = [u], \calS = \{S_1, \dots, S_M\})$ with degree~$6$, where $u$ is a multiple of $3$ and $M \ge u/3$, we construct the bipartite graph $G = (A, B = [b], E)$ as follows (see \Cref{fig:1D-NP}):
\begin{itemize}
\item Let $b = 3(s + 1)M$ and $B=[b]$.
For each $i\in [M]$, we refer to the vertices $3(s + 1)(i - 1) + 1,\dots, 3(s + 1)i$ as \emph{block $B_i$}; so for each set  $S_i\in \calS$, $B$ has a block with $3(s + 1)$ vertices corresponding to $S_i$.
\item Let $A = \{v_1, \dots, v_u\} \cup \{w_1, \dots, w_k\}$, where $k := 2(M - u/3)$.
\item For all $i \in [M]$, and 
for all $u \in S_i$, 
connect the three vertices $3(s + 1)(i - 1) + 1,~~ 3(s + 1)(i - 1) + s + 2,~~ 3(s + 1)(i - 1) + 2s + 3 \in B_i$ to the vertex $v_u \in A$. (See the solid edges in \Cref{fig:1D-NP}.)
\item For each $i \in [M]$, connect the two vertices $3(s + 1)(i - 1) + 2,~~ 3(s + 1)(i - 1) + s + 3 \in B_i$ to all vertices $w_1, \dots, w_k \in A$. (See the dashed edges in \Cref{fig:1D-NP}.)
\end{itemize}
This completes the description of $G$.
Clearly, $G$ can be constructed in polynomial time.

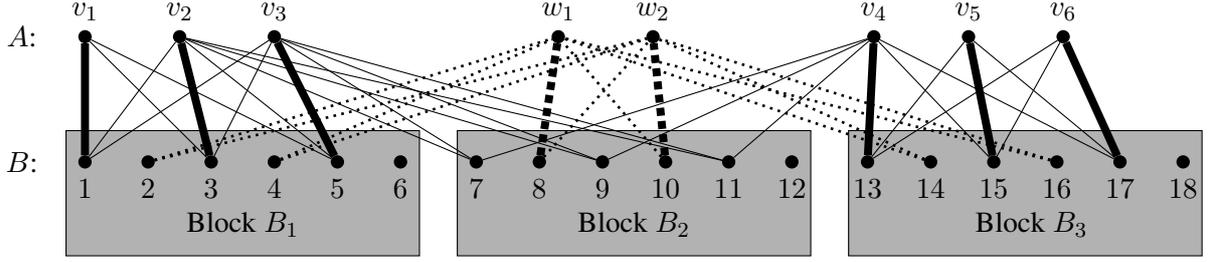
\begin{figure}
\begin{center}
\begin{tikzpicture}[scale=0.83]
\node (A) at (0,2) {$A$:};
\nodecircle{above:$v_1$} (v1) at (1,2) {};
\nodecircle{above:$v_2$} (v2) at (2.5,2) {};
\nodecircle{above:$v_3$} (v3) at (4,2) {};
\nodecircle{above:$v_4$} (v4) at (13.5,2) {};
\nodecircle{above:$v_5$} (v5) at (15,2) {};
\nodecircle{above:$v_6$} (v6) at (16.5,2) {};
\nodecircle{above:$w_1$} (w1) at (8.5,2) {};
\nodecircle{above:$w_2$} (w2) at (10,2) {};
\node (B) at (0, 0) {$B$:};

\draw[fill=black!30] (0.7,-1.5) rectangle ++(5.6,2);
\nodecircle{below:$1$} (b1) at (1,0) {};
\nodecircle{below:$2$} (b2) at (2,0) {};
\nodecircle{below:$3$} (b3) at (3,0) {};
\nodecircle{below:$4$} (b4) at (4,0) {};
\nodecircle{below:$5$} (b5) at (5,0) {};
\nodecircle{below:$6$} (b6) at (6,0) {};
\node[] (T1) at (3.5,-1) {Block $B_1$};

\draw[fill=black!30] (6.9,-1.5) rectangle ++(5.6,2);
\nodecircle{below:$7$} (b7) at (7.2,0) {};
\nodecircle{below:$8$} (b8) at (8.2,0) {};
\nodecircle{below:$9$} (b9) at (9.2,0) {};
\nodecircle{below:$10$} (b10) at (10.2,0) {};
\nodecircle{below:$11$} (b11) at (11.2,0) {};
\nodecircle{below:$12$} (b12) at (12.2,0) {};
\node[] (T2) at (9.7,-1) {Block $B_2$};

\draw[fill=black!30] (13.1,-1.5) rectangle ++(5.6,2);
\nodecircle{below:$13$} (b13) at (13.4,0) {};
\nodecircle{below:$14$} (b14) at (14.4,0) {};
\nodecircle{below:$15$} (b15) at (15.4,0) {};
\nodecircle{below:$16$} (b16) at (16.4,0) {};
\nodecircle{below:$17$} (b17) at (17.4,0) {};
\nodecircle{below:$18$} (b18) at (18.4,0) {};
\node[] (T3) at (16,-1) {Block $B_3$};

\draw[line width=3pt]  (b1) -- (v1);
\draw  (b3) -- (v1);
\draw  (b5) -- (v1);
\draw  (b1) -- (v2);
\draw[line width=3pt]  (b3) -- (v2);
\draw  (b5) -- (v2);
\draw  (b1) -- (v3);
\draw  (b3) -- (v3);
\draw[line width=3pt]  (b5) -- (v3);

\draw  (b7) -- (v2);
\draw  (b9) -- (v2);
\draw  (b11) -- (v2);
\draw  (b7) -- (v3);
\draw  (b9) -- (v3);
\draw  (b11) -- (v3);
\draw  (b7) -- (v4);
\draw  (b9) -- (v4);
\draw  (b11) -- (v4);

\draw[line width=3pt]  (b13) -- (v4);
\draw  (b15) -- (v4);
\draw  (b17) -- (v4);
\draw  (b13) -- (v5);
\draw[line width=3pt]  (b15) -- (v5);
\draw  (b17) -- (v5);
\draw  (b13) -- (v6);
\draw  (b15) -- (v6);
\draw[line width=3pt]  (b17) -- (v6);

\draw[dotted,line width=1pt]  (b2) -- (w1);
\draw[dotted,line width=1pt]  (b2) -- (w2);
\draw[dotted,line width=1pt]  (b4) -- (w1);
\draw[dotted,line width=1pt]  (b4) -- (w2);

\draw[dotted,line width=3pt]  (b8) -- (w1);
\draw[dotted,line width=1pt]  (b8) -- (w2);
\draw[dotted,line width=1pt]  (b10) -- (w1);
\draw[dotted,line width=3pt]  (b10) -- (w2);

\draw[dotted,line width=1pt]  (b14) -- (w1);
\draw[dotted,line width=1pt]  (b14) -- (w2);
\draw[dotted,line width=1pt]  (b16) -- (w1);
\draw[dotted,line width=1pt]  (b16) -- (w2);
\end{tikzpicture}
\end{center}
\caption{
\label{fig:1D-NP}
Graph illustrating the reduction in \Cref{thm:1D-NP}. 
Here, 
$s=1, u=6, M=3, S_1=\{1,2,3\}, S_2=\{2,3,4\}, S_3=\{4,5,6\}, b=18, k=2$.
The thick edges denote the matching induced by the packing $\{S_1, S_3\}$.
}
\end{figure}

\paragraph{(Completeness)} Suppose that there exists $\cI = \{i_1, \dots, i_{u/3}\}$ such that $S_{i_1}, \dots, S_{i_{u/3}}$ are disjoint. 
Since each $S_i$ has size~$3$, this also means that $S_{i_1} \cup \cdots \cup S_{i_{u/3}} = [u]$. 
We construct the matching as follows (see the thick edges in \Cref{fig:1D-NP}).
For each $i \in [M]$:
\begin{itemize}
\item If $i \in \cI$, 
add to the matching three edges connecting the vertices in $A$ corresponding to the elements of $S_i$ to block $B_i$. Specifically, 
let $e_1, e_2, e_3$ denote the elements of $S_i$; add the edges $(v_{e_1}, 3(s + 1)(i - 1) + 1),~~ (v_{e_2}, 3(s + 1)(i - 1) + s + 2),~~ (v_{e_3}, 3(s + 1)(i - 1) + 2s + 3)$.
\item If $i \notin \cI$, add to the matching two edges between block $B_i$ and the vertices $w_j$. Specifically, match $3(s + 1)(i - 1) + 2,~~ 3(s + 1)(i - 1) + s + 3$ to any unmatched vertices among $w_1, \dots, w_k$.
\end{itemize}
It follows from the construction that the separation constraint is respected. 
Note that the second step is valid since there are exactly $M - u/3 = k/2$ indices $i$ outside $\cI$; indeed, this means that all of the vertices $w_1, \dots, w_k$ are matched. 
Furthermore, $S_{i_1} \cup \cdots \cup S_{i_{u/3}} = [u]$ ensures that all of the vertices $v_1, \dots, v_u$ are matched. 
Thus, this is a matching of size $a = |A|$ with separation $s$.

\paragraph{(Soundness)} 
Let $\eps = \zeta/13$, where $\zeta$ is as in \Cref{lem:3sp-hardness}.
Suppose that there exists a matching $W$ with separation $s$ of size at least $(1 - \eps)a$. 
Consider each block~$B_i$; notice that the separation constraint implies that at most three vertices in this block can be matched. 
Let $\cI$ denote the set of indices $i$ such that exactly three vertices in $B_i$ are matched. 
We have
$(1 - \eps)a \leq |W| \leq 3|\cI| + 2(M - |\cI|)$. Plugging in $a = 2M + u/3$ and rearranging, we get
\begin{align*}
|\cI| \geq (1 - \eps)\left(2M + \frac{u}{3}\right) - 2M \geq (1 - \zeta)\cdot\frac{u}{3},
\end{align*}
where the latter inequality follows from our choice of $\zeta$ and from the fact that each element of $U$ appears in at most $6$ sets (implying that $M \leq 6u/3 = 2u$). 
Finally, observe that for each $i \in \cI$, since block~$B_i$ has three vertices matched, these three vertices must be $3(s + 1)(i - 1) + 1,~~ 3(s + 1)(i - 1) + s + 2,~~ 3(s + 1)(i - 1) + 2s + 3$. This implies that the subsets $S_i$ for all $i \in \cI$ are disjoint, and so at least $(1-\zeta)u/3$ subsets are pairwise disjoint. 

\vspace{2mm}

Thus, by \Cref{lem:3sp-hardness}, distinguishing between the YES and NO cases is indeed NP-hard.
\end{proof}

\subsection{Approximation algorithms}

In spite of \Cref{thm:1D-NP}, we show next that it is possible to achieve decent approximation in polynomial time. 
We start with the basic case $s = 1$.
For this case, an easy $2$-approximation algorithm is to find a maximum-cardinality matching (without the separation constraint) and remove either all the odd vertices in $B$ or all the even vertices in $B$ from the matching, whichever there are fewer. 
An improved approximation algorithm is given below.

\begin{theorem}
In the one-dimensional setting, for any fixed $\eps > 0$, there exists a polynomial-time $(4/3 + \eps)$-approximation algorithm for computing a maximum matching of $G$ with separation $s = 1$.
\end{theorem}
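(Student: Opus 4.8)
The plan is to reduce the problem to Maximum $3$-Set Packing and invoke \Cref{thm:set-packing} with $k = 3$, which yields exactly the ratio $4/3 + \eps$. The first observation is that for $s = 1$ the separation constraint says precisely that the $B$-vertices used by the matching must be pairwise non-adjacent on the path; equivalently, two edges $e = (v, \beta)$ and $e' = (v', \beta')$ of $G$ cannot both be used if and only if $v = v'$, or $\beta = \beta'$, or $|\beta - \beta'| = 1$. I will call such a pair \emph{conflicting}, so that a valid matching is exactly a conflict-free set of edges, and the goal becomes finding a maximum conflict-free set.

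The crux is to encode these conflicts as set intersections using sets of size exactly $3$. I introduce a universe $U$ with two kinds of elements: one element $x_v$ for every vertex $v \in A$, and one ``gap'' element $g_j$ for each $j \in \{0, 1, \dots, b\}$, where $g_j$ represents the gap between the consecutive path positions $j$ and $j+1$ (with $g_0$ and $g_b$ being boundary gaps). To each edge $e = (v, \beta) \in E$ I associate the $3$-element set
\[
  T(e) \;=\; \{\, x_v,\ g_{\beta-1},\ g_{\beta} \,\},
\]
i.e.\ the $A$-vertex together with the two gaps flanking position $\beta$.

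I will then check that this map is injective and that $T(e) \cap T(e') \neq \emptyset$ holds precisely when $e$ and $e'$ conflict: the sets share $x_v$ if and only if $v = v'$, and they share a gap element if and only if $\{\beta-1,\beta\} \cap \{\beta'-1,\beta'\} \neq \emptyset$, which happens exactly when $\beta = \beta'$ or $|\beta - \beta'| = 1$. Consequently a collection of edges forms a valid matching with separation $s = 1$ if and only if the corresponding sets $T(e)$ are pairwise disjoint, and the matching size equals the number of chosen sets. Hence the maximum matching of $G$ and the maximum $3$-set packing of the instance $(U, \calS)$ with $\calS = \{T(e) : e \in E\}$ have exactly the same value, and any packing can be converted back into a matching of the same size in linear time.

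Finally, I run the algorithm of \Cref{thm:set-packing} with $k = 3$ on this instance; since $|U|$ and $|\calS|$ are polynomial in the size of $G$, it runs in polynomial time and returns a packing within a factor $4/3 + \eps$ of optimal, which translates into a matching within the same factor. The main thing to get right is the encoding itself, and in particular the decision to represent each used $B$-vertex by its two incident path-gaps: this is what simultaneously captures the ``same vertex'' and ``adjacent vertex'' conflicts while keeping every set of size exactly $3$. This is also the point where $s = 1$ is essential, since a larger separation parameter would force larger sets and hence a worse ratio through \Cref{thm:set-packing}.
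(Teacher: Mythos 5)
Your proposal is correct and is essentially the same construction as the paper's: the paper also takes the universe $A \cup \{0,\dots,b\}$ and maps each edge $\{a',b'\}$ to the triple $\{a', b'-1, b'\}$ before invoking \Cref{thm:set-packing}. Your ``gap'' elements $g_{\beta-1}, g_{\beta}$ are just a relabeling of the paper's integers $b'-1, b'$, so the two arguments coincide.
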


\begin{proof}
We create a Maximum $3$-Set Packing instance as follows:
\begin{itemize}
\item Let $U = A \cup \{0, \dots, b\}$ be the universe.
\item For every edge $\{a', b'\} \in E$ where $a'\in A$ and $b'\in B$, create a subset $S_{\{a', b'\}} = \{a', b' - 1, b'\}$.
\end{itemize}
Note that there is a one-to-one correspondence between a matching of $G$ with separation parameter $1$ and a collection of disjoint sets from $\{S_e\}_{e \in E}$. Therefore, we may apply the Maximum $3$-Set Packing algorithm from \Cref{thm:set-packing}, which gives the desired $(4/3 + \eps)$-approximation.
\end{proof}

The next algorithm has a worse approximation ratio, but works for any $s\geq 1$.

\begin{theorem} 
\label{thm:one-dim-better-than-two}
There exists a constant $\eps > 0$ such that, in the one-dimensional setting, for any fixed $s\ge 1$, there exists a polynomial-time $(2 - \eps)$-approximation algorithm for computing a maximum matching of~$G$ with separation  $s$.
\end{theorem}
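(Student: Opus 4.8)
The plan is to reduce the problem to Maximum Independent Set in a $4$-claw-free graph and then invoke \Cref{thm:ind} with $d = 4$, which delivers exactly a $(2 - \eps)$-approximation. Note that the $(s+2)$-Set Packing reduction (assign to an edge $\{a',b'\}$ the set $\{a'\} \cup \{b', b'+1, \dots, b'+s\}$, so that two such sets are disjoint iff the edges are compatible) only yields a $\left(\frac{s+3}{3} + \eps\right)$-approximation via \Cref{thm:set-packing}, which drops below $2$ only for $s \le 2$; so a uniform tool is needed to handle all $s$ at once.

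First I would build the \emph{conflict graph} $H = (V_H, E_H)$ whose vertex set $V_H = E$ is the edge set of $G$, and in which two edges $e = \{a', b'\}$ and $f = \{a'', b''\}$ (with $a', a'' \in A$ and $b', b'' \in B$) are adjacent iff they cannot coexist in a valid matching, i.e.\ iff $a' = a''$ or $|b' - b''| \le s$. By construction, a set of edges of $G$ forms a matching of $G$ with separation $s$ if and only if the corresponding vertices form an independent set in $H$; hence a maximum matching with separation $s$ corresponds to a maximum independent set in $H$, and any $\alpha$-approximate independent set in $H$ pulls back to an $\alpha$-approximate matching. The graph $H$ has $|E|$ vertices and is constructible in polynomial time.

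The crux is to show that $H$ is $4$-claw-free, \emph{regardless} of the value of $s$. Consider any claw in $H$ with center $e_0 = \{a_0, b_0\}$ and talons $e_1, \dots, e_d$, where $e_i = \{a_i, b_i\}$. Since the talons are pairwise non-adjacent in $H$, they are pairwise compatible: their $A$-endpoints are distinct and their $B$-endpoints satisfy $|b_i - b_j| > s$ for $i \ne j$. Each talon is adjacent to the center, so for each $i$ either (i) $a_i = a_0$ or (ii) $|b_i - b_0| \le s$. At most one talon can be of type (i), since two such talons would share the $A$-endpoint $a_0$, contradicting distinctness. At most two talons can be of type (ii): their $B$-endpoints all lie in $\{b_0 - s, \dots, b_0 + s\}$, whose extreme elements differ by $2s$, whereas any two of them differ by at least $s + 1$; three of them would force their extremes to differ by at least $2s + 2 > 2s$, a contradiction. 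Hence every claw in $H$ has at most $1 + 2 = 3$ talons, so $H$ contains no induced $4$-claw.

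Finally, applying \Cref{thm:ind} with $d = 4$ gives a polynomial-time $(4/2 - \eps) = (2 - \eps)$-approximation for Maximum Independent Set in $H$, and the correspondence above turns this into a $(2 - \eps)$-approximation for the maximum matching of $G$ with separation $s$; the constant $\eps$ is the one furnished by \Cref{thm:ind}. The main obstacle is the $4$-claw-free argument, specifically the type-(ii) count, which is precisely where the one-dimensional (path) geometry and the separation parameter enter: on a path an interval of radius $s$ admits at most two points that are pairwise more than $s$ apart, and it is this bound of $2$ that keeps the claw size at $3$ uniformly in $s$. (The analogous packing bound is larger in two dimensions, which is presumably why the grid case is treated separately in \Cref{sec:2D}.)
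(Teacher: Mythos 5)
Your proposal is correct and follows essentially the same route as the paper: it builds the identical conflict graph $H$ on the edge set of $G$, proves $H$ is $4$-claw-free, and invokes \Cref{thm:ind} with $d=4$. The only cosmetic difference is in the claw-freeness argument—the paper splits the neighborhood of the center into three classes (same $A$-endpoint, $B$-endpoint in $\{b'-s,\dots,b'\}$, $B$-endpoint in $\{b'+1,\dots,b'+s\}$) and applies pigeonhole directly, whereas you bound the two types ($1$ talon sharing the $A$-endpoint plus at most $2$ talons with $B$-endpoints pairwise more than $s$ apart inside an interval of length $2s$) separately; these are the same counting argument.
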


\begin{proof}
We will reduce to Maximum Independent Set in $4$-claw-free graphs. 
Let $H = (V_H, E_H)$ be the ``conflict graph'' of the edges in $G = (A,B,E)$ defined as follows:
\begin{itemize}
\item Let $V_H = E$.
\item For $\{a',b'\},\{a'',b''\}\in E$, add $\{\{a', b'\}, \{a'', b''\}\}$ to $E_H$ if and only if $a' = a''$ or $|b' - b''| \leq s$.
\end{itemize}
Note that there is a one-to-one correspondence between a matching of $G$ with separation parameter $s$ and an independent set of $H$. 
We claim that $H$ is $4$-claw-free. 
This claim allows us to invoke the algorithm from \Cref{thm:ind}, which immediately yields the desired result.

To see that the claim holds, suppose for the sake of contradiction that there is a $4$-claw~$\Omega$ in $H$, consisting of a center $\{a', b'\}$ and four talons $\{a_1, b_1\}, \{a_2, b_2\}, \{a_3, b_3\}, \{a_4, b_4\}$. 
Since there is an edge between the center and each talon, for each $i \in [4]$, at least one of the following three conditions must hold: (i) $a_i = a'$; (ii) $b_i \in \{b' - s, b' - s + 1, \dots, b'\}$; (iii) $b_i \in \{b' + 1, \dots, b' + s\}$. 
By the pigeonhole principle, at least two talons must satisfy the same condition among the three conditions, so there must be an edge between the vertices corresponding to the two talons in $H$. 
Therefore, $\Omega$ cannot be a $4$-claw, completing the proof of the claim.
\end{proof}

\subsection{Exact algorithm}
Besides approximation, another interesting question is whether we can compute an exact maximum matching efficiently if we impose additional structure on the graph~$G$.
A natural assumption is that each vertex in~$A$ has edges to an \emph{interval} of vertices in~$B$---this reflects scenarios where each student is only available within an interval of consecutive time slots, or each spectator finds a certain block of seats to be acceptable.
We show that this additional structure indeed allows the problem to be solved exactly, by reducing it to a scheduling problem.

\paragraph{Equal-length jobs with release times and deadlines.} 
Suppose we are given $n$ jobs, each job of the same integer length $p \ge 1$. 
The $i$-th job has an integer release time $r_i\geq 0$ and an integer deadline $d_i\geq 1$.  
A schedule for a subset $S \subseteq [n]$ is a list of starting times $s_i$ for each $i \in S$ that respects the constraints, i.e., $s_i \geq r_i$ and $s_i + p \leq d_i$ for each $i\in S$, and $[s_i, s_i + p)\cap [s_j,s_j + p) = \emptyset$ for every pair of distinct $i,j\in S$. 
The \emph{throughput} of a schedule is the number of jobs in the schedule. 

\begin{theorem}[\cite{ChrobakDuJa21}]
 \label{thm:sch}
There exists an algorithm that, given $n$ jobs with release times, deadlines, and identical lengths, computes a schedule that maximizes the throughput in time $O(n^5)$.
\end{theorem}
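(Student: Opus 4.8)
Since \Cref{thm:sch} is a known result from scheduling theory---it is the single-machine problem $1 \mid r_j,\, p_j = p \mid \sum U_j$ of maximizing the number of on-time, equal-length jobs---the task is to reconstruct a polynomial-time algorithm and account for the $O(n^5)$ bound. My plan rests on two ingredients: discretizing the (a priori continuous) set of start times down to a polynomial-size candidate set, and then exploiting the equal-length structure through an exchange argument to drive a dynamic program over those candidates.

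First I would reduce to canonical start times by a left-shifting exchange argument. Starting from any feasible schedule, I repeatedly move each job as early as possible, subject to its release time and to the completion time of the job that finishes just before it; this never destroys feasibility and never decreases throughput. In the resulting schedule every scheduled job starts either exactly at its release time or exactly when another job completes, so, arguing by induction along each maximal block of back-to-back jobs, every start time lies in $T = \{\, r_i + kp : i \in [n],\ 0 \le k \le n-1 \,\}$, a set of size at most $n^2$. The problem then becomes: select a subset of jobs and assign each a start time from $T$ so that the occupied intervals $[s_i, s_i+p)$ are pairwise disjoint and each $s_i \in [r_i,\, d_i - p]$, maximizing the number of selected jobs.

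The heart of the argument is a structural (EDF-type) exchange lemma: once the \emph{set of occupied slots} is fixed and fillable, one can fill the slots in increasing order of time, always assigning a feasible, not-yet-used job of smallest deadline; a standard swap argument shows this preserves both feasibility and throughput. Consequently the identity of the scheduled jobs is determined by the \emph{shape} of the schedule, so a dynamic program need not remember the exact set of used jobs. I would then scan the candidate times in increasing order, with a state recording the current position, the completion time of the last scheduled job (itself lying in $T$), and the number of jobs committed so far; at each step the DP decides whether to open a new slot, and by the lemma which job fills it is forced. With $O(n^2)$ candidate start times, $O(n^2)$ possible ``frontier'' completion times, and $O(n)$ values for the count, a careful accounting of the transitions yields the claimed $O(n^5)$ running time.

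The routine part is the discretization; the delicate part is the exchange lemma underpinning the DP, and this is exactly where equal length is indispensable---for arbitrary lengths the analogous job-interval selection problem is NP-hard. I expect the main work to be verifying that swapping in a smaller-deadline feasible job never breaks a later slot, and that the aggregate information stored in the DP state suffices to make the forced EDF choice well-defined, so that the recurrence is both correct and evaluable within the stated time bound.
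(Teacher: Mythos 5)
First, note that the paper does not prove this statement at all: \Cref{thm:sch} is imported verbatim from the cited reference (Chrobak, D\"urr, Jawor, Kowalik, and Kurowski's note on scheduling equal-length jobs to maximize throughput, building on Baptiste's work on $1 \mid r_j, p_j = p \mid \sum U_j$), so the only meaningful benchmark is that known algorithm. Your first two ingredients match the literature: restricting start times to $T = \{r_i + kp\}$ of size $O(n^2)$ is the standard discretization, and the EDF exchange lemma (given a \emph{fixed} set of slots, fill them in time order with the feasible unused job of smallest deadline) is correct.

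The gap is in the dynamic program. Your state records only the current position, the last completion time, and the count---but the EDF lemma determines the set of used jobs as a function of the \emph{entire} slot set, which your state does not retain. Two prefixes reaching the same state $(t, c)$ can consume different job sets (e.g., with $p=1$, slots $\{0,2\}$ versus $\{1,2\}$ on jobs with release/deadline pairs $(0,1)$, $(0,4)$, $(2,3)$ both reach ``last completion $3$, count $2$'' but leave different jobs available), so the transition ``open a new slot at $t'$'' is not well-defined: whether a feasible unused job exists depends on information the state has discarded, and the ``forced EDF choice'' cannot be evaluated. This is not a presentational issue but the central obstacle in this problem---it is precisely why the published algorithms abandon the left-to-right scan in favor of an interval decomposition: jobs are sorted by deadline, the state is a pair of candidate times $(t_1,t_2)$ together with a prefix index $k$ in the deadline order, and the recurrence branches on where job $k$ (the largest deadline) is placed, which splits the remaining jobs by release time into two \emph{independent} subproblems on disjoint time intervals. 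That decomposition is what makes the bookkeeping of ``which jobs are still available'' tractable and yields the $O(n^5)$ bound; your DP as stated would either overcount (if availability is not checked) or require exponential state (if the full slot set is kept). To repair the argument you would need to replace the scan with this divide-and-conquer recurrence, at which point your EDF lemma is no longer the load-bearing step.
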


\begin{theorem}
 \label{thm:one-dim-contiguous-block}
In the one-dimensional setting, for any fixed $s\ge 1$, if each vertex in~$A$ has edges to a (possibly empty) interval of vertices in~$B$, then there exists an algorithm that computes a maximum matching with separation  $s$ in polynomial time.
\end{theorem}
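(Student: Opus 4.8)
The plan is to reduce the problem to the equal-length scheduling problem of \Cref{thm:sch}, using the interval structure to encode each vertex of $A$ as a job. For each $a' \in A$, the interval assumption means its neighborhood in $B$ is a set of consecutive vertices $\{\ell_{a'}, \ell_{a'}+1, \dots, \rho_{a'}\}$ (and $a'$ is isolated when this interval is empty, in which case it can be ignored). Matching $a'$ to a vertex $b' \in B$ is possible exactly when $\ell_{a'} \le b' \le \rho_{a'}$. The key observation is that with separation parameter $s$, two matched vertices $b' < b''$ in $B$ must satisfy $b'' - b' \ge s+1$, since there must be at least $s$ vertices strictly between them; this is precisely the non-overlap condition for two jobs of length $p := s+1$ started at positions $b'$ and $b''$, because $[b', b'+p) \cap [b'', b''+p) = \emptyset$ iff $b'' \ge b' + s + 1$.

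Concretely, I would build a scheduling instance with one job per non-isolated $a' \in A$, all of common length $p = s+1$, with release time $\ell_{a'}$ and deadline $\rho_{a'} + s + 1$. A job started at $s_{a'}$ is then feasible exactly when $\ell_{a'} \le s_{a'} \le \rho_{a'}$, i.e.\ when $s_{a'}$ names a vertex of $B$ to which $a'$ may legally be matched; and a set of jobs has pairwise-disjoint execution intervals exactly when the corresponding matched $B$-vertices respect the separation constraint. This yields a throughput-preserving correspondence: a matching with separation $s$ of size $m$ gives a feasible schedule of throughput $m$ by taking each start time to equal the matched $B$-vertex, and conversely.

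The one point requiring care in the converse direction is \emph{integrality of start times}: to read off a matching I need each scheduled job to start at an integer, so that $s_{a'}$ denotes an actual vertex of $B$. This is handled by the standard left-shifting argument. Given any feasible schedule, process the scheduled jobs in increasing order of start time and greedily move each to the earliest feasible position, namely $\max(\ell_{a'},\ \text{end time of the previous scheduled job})$. Since all release times and the common length are integers, every start time produced this way is an integer; moving jobs only earlier preserves all deadlines, and because jobs are handled in start-time order, disjointness is preserved. The resulting integer schedule has the same throughput, and since $p = s+1 \ge 2 > 0$ the start times are necessarily distinct, so matching each scheduled $a'$ to $b' := s_{a'}$ gives a valid matching of $G$ with separation $s$ of the same size.

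Finally I would invoke \Cref{thm:sch} to compute a maximum-throughput schedule in time $O(n^5)$ with $n \le |A|$, translate it back into a matching as above, and conclude optimality: since both directions of the correspondence preserve size, the maximum throughput equals the maximum matching size with separation $s$, and the whole procedure runs in polynomial time. I expect the only genuinely delicate step to be justifying the integrality reduction cleanly; the remainder is a direct encoding, and it is exactly the interval hypothesis that allows an arbitrary neighborhood to collapse into a single release-time/deadline window.
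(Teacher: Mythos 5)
Your proposal is correct and follows essentially the same route as the paper: both reduce to scheduling equal-length jobs of length $p=s+1$ with release time equal to the left endpoint of the interval and deadline equal to the right endpoint plus $p$, then invoke \Cref{thm:sch}. The only cosmetic difference is in handling integrality of start times --- the paper simply rounds each start time down (which works since $p$ and all release times/deadlines are integers), while you use a left-shifting argument; both are valid.
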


\begin{proof}
We reduce the problem to that of scheduling equal-length jobs with release times and deadlines.
Specifically, we let the length of each job be $p = s + 1$. 
For each vertex $a' \in A$ with edges to a nonempty interval of vertices $x_{a'},x_{a'}+1,\dots,y_{a'}\in B$, we create a job with release time $r_{a'} = x_{a'}$ and deadline $d_{a'} = y_{a'} + p$. 
Note that there is a one-to-one correspondence between a matching of $G$ with separation parameter $s$ and a scheduling\footnote{We may assume without loss of generality that the start times in the schedules are integers, since we can round each start time down to an integer and the schedules remain feasible.} of a subset of the jobs. 
Therefore, applying \Cref{thm:sch} to the constructed jobs yields a maximum matching with separation parameter $s$ in polynomial time.
\end{proof}

\subsection{Extensions}
Next, we briefly discuss some ways in which our positive results can be extended.

\paragraph{Groups of spectators.} Motivated by the application of seating spectators in a theater (see \Cref{sec:intro}), one can consider a setting in which the spectators come in groups of (equal) size $g\geq 1$. 
Each group must be given a contiguous block of $g$ seats, and blocks of different groups must be separated by at least $s$ seats.
\Cref{thm:one-dim-better-than-two} can be adapted to this setting.
In the graph $G$, each vertex in the set $A$ corresponds to a group, and there is an edge between $a'\in A$ and $b'\in B$ if the block of $g$ seats starting at $b'$ is acceptable for group $a'$.
The conflict graph $H$ includes the edge $\{\{a', b'\}, \{a'', b''\}\}$ in $E_H$ if and only if $a' = a''$ 
or $b'' \leq b' \leq b'' + g+ s-1$
or $b' \leq b'' \leq b' + g+ s-1$.
As in the proof of \Cref{thm:one-dim-better-than-two}, this graph $H$ is $4$-claw free, so a polynomial-time $(2 - \eps)$-approximation algorithm exists in this setting as well.
The case where groups may have different sizes remains open.

\paragraph{Bilateral separation constraints.}
\Cref{thm:one-dim-better-than-two} can also be adapted to a setting in which the separation constraint is applied to \emph{both} sides of the graph.
The conflict graph in this case includes the edge $\{\{a', b'\}, \{a'', b''\}\}$ in $E_H$ if and only if $|a' - a''| \leq s$ or $|b' - b''| \leq s$.
By similar arguments to the proof of \Cref{thm:one-dim-better-than-two}, this conflict graph is $5$-claw-free, so a polynomial-time $(2.5 - \eps)$-approximation algorithm exists by \Cref{thm:ind}.

\paragraph{Maximum-weight matching.}
Both \Cref{thm:one-dim-better-than-two,thm:one-dim-contiguous-block} can be extended to a bipartite graph with \emph{weighted} edges, where the goal is to find a matching that maximizes the sum of edge weights, subject to the separation constraint.
\Cref{thm:one-dim-better-than-two} uses the algorithm of \cite{Neuwohner21}, which is already designed for the maximum-weight independent set problem.
For \Cref{thm:one-dim-contiguous-block}, instead of the algorithm by \cite{ChrobakDuJa21}, we can use an algorithm by \citet{baptiste1999polynomial}; this algorithm solves the scheduling problem with equal-length jobs, where the jobs have weights and the goal is to maximize the total weight of the jobs in the schedule, in time $O(n^7)$. 

\section{Two Dimensions}
\label{sec:2D}

In this section, we turn our attention to the two-dimensional case, which is a suitable representation for scenarios such as assigning seats in a theater.
Specifically, we assume that the vertices in~$B$ form a grid, i.e., $B = [\beta_1] \times [\beta_2]$.

\subsection{$\ell_1$ metric} 
Given two points $p = (p_x,p_y), p' = (p_x',p_y') \in \R^2$, their \emph{$\ell_1$ distance}, also known as \emph{Manhattan} or \emph{taxicab distance}, is defined as $\|p - p'\|_1 :=|p_x - p'_x| + |p_y - p'_y|$.
The separation constraint stipulates that two points $p, p' \in B$ should not both be matched if the $\ell_1$ distance between them is at most~$s$.

To facilitate the discussion below, for any point $p \in \R^2$ and real number $r \geq 0$, let 
\[
\cB(p, r) := \{p' \in \R^2 \mid \|p - p'\|_1 \leq r\}.
\]
Furthermore, let $\pu = p + (0, r/2), \pd = p - (0, r/2), \pr = p + (r/2, 0)$, and $\pl = p - (r/2, 0)$. 
We will use the following identity which is not hard to verify (see \Cref{fig:ball-pack}): for any $p \in \R^2$ and real number $r \geq 0$,
\begin{align} \label{eq:ball-pack}
\cB(p, r) = \cB(\pu, r/2) \cup \cB(\pd, r/2) \cup \cB(\pr, r/2) \cup \cB(\pl, r/2).
\end{align}

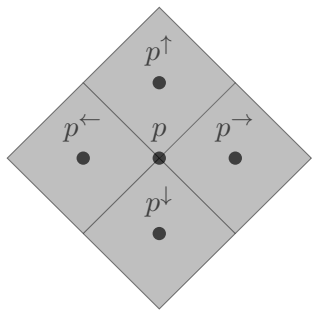
\begin{figure}
\begin{center}
\begin{tikzpicture}	
\nodecircle{$p$} (p) at (0,0) {};
\nodecircle{$\pu$} (pu) at (0,1) {};
\nodecircle{$\pd$} (pd) at (0,-1) {};
\nodecircle{$\pr$} (pr) at (1,0) {};
\nodecircle{$\pl$} (pl) at (-1,0) {};

\draw [opacity=0.5,fill=gray] (0,0) -- (1,1) -- (2,0) -- (1,-1);
\draw [opacity=0.5,fill=gray] (0,0) -- (1,-1) -- (0,-2) -- (-1,-1);
\draw [opacity=0.5,fill=gray] (0,0) -- (-1,-1) -- (-2,0) -- (-1,1);
\draw [opacity=0.5,fill=gray] (0,0) -- (-1,1) -- (0,2) -- (1,1);
\end{tikzpicture}
\end{center}
\caption{
\label{fig:ball-pack}
Covering a ball of radius $r$ in the $\ell_1$ metric with four balls of radius $r/2$.
}
\end{figure}

\begin{theorem}
\label{thm:two-dimensional}
There exists a constant $\eps > 0$ such that, for any fixed $s \geq 1$, in the two-dimensional setting, there exists a polynomial-time $(3 - \eps)$-approximation algorithm for computing a maximum matching of~$G$ with separation parameter $s$ in the $\ell_1$ metric.
\end{theorem}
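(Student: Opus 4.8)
The plan is to mirror the proof of \Cref{thm:one-dim-better-than-two}: reduce the problem to Maximum Independent Set in $d$-claw-free graphs and invoke \Cref{thm:ind}. Since the target ratio $3 - \eps$ equals $d/2 - \eps$, I aim for $d = 6$. First I would form the conflict graph $H = (V_H, E_H)$ with $V_H = E$, placing an edge between two edges $\{a', b'\}$ and $\{a'', b''\}$ of $G$ exactly when they cannot coexist in a feasible matching, i.e., when $a' = a''$ or $\|b' - b''\|_1 \le s$. As in the one-dimensional case, feasible matchings with separation $s$ correspond bijectively to independent sets of $H$, so a $(3-\eps)$-approximation for Maximum Independent Set on $H$ transfers directly, and the existential constant $\eps$ is the one supplied by \Cref{thm:ind}. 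It therefore suffices to show that $H$ is $6$-claw-free.

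For this claim I would argue by contradiction, assuming $H$ contains an induced $6$-claw with center $\{a', b'\}$ and talons $\{a_1, b_1\}, \dots, \{a_6, b_6\}$. Each talon is adjacent to the center, so for every $i$ either $a_i = a'$ or $b_i \in \cB(b', s)$. The crucial ingredient is the covering identity \eqref{eq:ball-pack} applied with $p = b'$ and $r = s$: it expresses $\cB(b', s)$ as the union of the four radius-$(s/2)$ balls centered at $b' \pm (0, s/2)$ and $b' \pm (s/2, 0)$. This lets me sort the six talons into five categories—one category for ``$a_i = a'$'' and one for each of the four small balls—so that every talon lands in at least one category. By pigeonhole, two talons share a category. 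If they share the ``$a_i = a'$'' category, their $A$-endpoints coincide and they are adjacent in $H$; if they lie in a common radius-$(s/2)$ ball, the triangle inequality gives $\|b_i - b_j\|_1 \le s/2 + s/2 = s$, so again they are adjacent in $H$. Either way two talons are joined by an edge, contradicting the definition of a claw. Hence $H$ is $6$-claw-free, and \Cref{thm:ind} with $d = 6$ completes the proof.

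I expect the only real content to be the pigeonhole/packing step, and the main thing to get right is the constant: the argument hinges on covering the $\ell_1$ ball of radius $s$ by exactly four balls of half the radius (\eqref{eq:ball-pack}), which, together with the single extra category for edges sharing a vertex of $A$, yields five categories and hence forces $d = 6$ rather than a larger value. A cruder bound on how many pairwise-far points fit in $\cB(b', s)$ would give a weaker claw parameter and thus a worse ratio, so it is precisely the geometry of \Cref{fig:ball-pack} that pins down $3 - \eps$. Everything else—the bijection between matchings and independent sets and the final invocation of the approximation algorithm—is routine and essentially identical in spirit to the one-dimensional case.
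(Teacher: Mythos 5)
Your proposal is correct and follows essentially the same route as the paper's own proof: the same conflict graph, the same appeal to the covering identity \eqref{eq:ball-pack} to split $\cB(p',s)$ into four radius-$(s/2)$ balls, and the same five-category pigeonhole argument establishing $6$-claw-freeness before invoking \Cref{thm:ind} with $d=6$. The only (harmless) addition is that you spell out the triangle-inequality step showing two points in a common radius-$(s/2)$ ball are within distance $s$, which the paper leaves implicit.
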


\begin{proof}
Similarly to the proof of \Cref{thm:one-dim-better-than-two}, we reduce to Maximum Independent Set in $6$-claw-free graphs. Let $H = (V_H, E_H)$ be the ``conflict graph'' of the edges in $G$ defined as follows:
\begin{itemize}
\item Let $V_H = E$.
\item Add $\{\{a', p'\}, \{a'', p''\}\}$ to $E_H$ if and only if $a' = a''$ or $\|p' - p''\|_1 \leq s$.
\end{itemize}
Note that there is a one-to-one correspondence between a matching of $G$ with separation parameter $s$ and an independent set of $H$. 
We claim that $H$ is $6$-claw-free. 
This claim allows us to invoke the algorithm from \Cref{thm:ind}, which gives the desired result.

To see that the claim holds, suppose for the sake of contradiction that there is a $6$-claw $\Omega$ in $H$, consisting of a center $\{a', p'\}$ and six talons $\{a_1, p_1\}, \dots, \{a_6, p_6\}$. 
Since there is an edge between the center to each talon, for each $i \in [6]$, at least one of the following two conditions must hold: $a_i = a'$, or $p_i \in \cB(p', s) = \cB(\ppu, s/2) \cup \cB(\ppd, s/2) \cup \cB(\ppr, s/2) \cup \cB(\ppl, s/2)$ (where the latter equality follows from \Cref{eq:ball-pack}). 
By the pigeonhole principle, there must exist $i\ne j$ such that the two talons $\{a_i, p_i\}, \{a_j, p_j\}$ satisfy $a_i = a_j = a'$ or $p_i, p_j \in \cB(p^*, s/2)$ for some $p^* \in \{\ppu, \ppd, \ppr, \ppl\}$.
In either case, this implies that there is an edge between $\{a_i, p_i\}$ and $\{a_j, p_j\}$ in $H$.
Hence, $\Omega$ cannot be a $6$-claw, completing the proof of the claim.
\end{proof}

\subsection{Other metrics}
\newcommand{\cov}{\operatorname{covnum}}

\Cref{thm:two-dimensional} uses the $\ell_1$ metric only through Equation~\eqref{eq:ball-pack}; it can be extended to other metrics if one can prove a similar ``covering'' result.\footnote{For other metrics, we do not require $s$ to be an integer.}
Formally, for each metric $\mu: \R^2 \times \R^2 \to \R_{\geq 0}$, define its \emph{covering number} $\cov(\mu)$ to be the smallest integer $c$ such that, for every $p \in \R^2$ and $r \in \R_{\geq 0}$, the $r$-radius $\mu$-ball centered at $p$ (i.e., $\cB_\mu(p, r) := \{q \in \R^2 \mid \mu(p, q) \leq r\}$)
can be covered by $c$ sets with $\mu$-diameter~$r$ each. 
Equation~\eqref{eq:ball-pack} shows that $\cov(\ell_1)\leq 4$: an $r$-radius $\ell_1$-ball can be covered by four sets with $\ell_1$-diameter $r$.
\begin{theorem}
\label{thm:two-dimensional-metrics}
There exists a constant $\eps > 0$ such that, for any metric $\mu$ and any fixed $s > 0$, in the two-dimensional setting, there exists a polynomial-time $(\cov(\mu)/2+1 - \eps)$-approximation algorithm for computing a maximum matching of~$G$ with separation $s$ in the metric $\mu$.
\end{theorem}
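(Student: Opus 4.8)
The plan is to follow the proof of \Cref{thm:two-dimensional} almost verbatim, replacing the concrete four-ball decomposition of \Cref{eq:ball-pack} by the abstract covering guaranteed by the definition of $\cov(\mu)$. Concretely, I would build the same conflict graph $H = (V_H, E_H)$ on the edge set of $G$: set $V_H = E$ and add $\{\{a', p'\}, \{a'', p''\}\}$ to $E_H$ exactly when $a' = a''$ or $\mu(p', p'') \leq s$. As before, independent sets of $H$ are in one-to-one correspondence with matchings of $G$ respecting the separation constraint in the metric $\mu$, so it suffices to approximate Maximum Independent Set on $H$. Since \Cref{thm:ind} delivers a $(d/2 - \eps)$-approximation on $d$-claw-free graphs, and we are targeting the ratio $\cov(\mu)/2 + 1 - \eps$, the quantity to aim for is $d = \cov(\mu) + 2$: the whole argument reduces to showing that $H$ is $(\cov(\mu)+2)$-claw-free.

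For the claw-freeness claim I would argue by contradiction, exactly as in \Cref{thm:two-dimensional}. Suppose $H$ contains a $(\cov(\mu)+2)$-claw with center $\{a', p'\}$ and talons $\{a_1, p_1\}, \dots, \{a_{\cov(\mu)+2}, p_{\cov(\mu)+2}\}$. Each talon is adjacent to the center, so for every $i$ either $a_i = a'$ or $p_i \in \cB_\mu(p', s)$. By the definition of the covering number, $\cB_\mu(p', s)$ can be covered by $\cov(\mu)$ sets $T_1, \dots, T_{\cov(\mu)}$, each of $\mu$-diameter $s$. Thus every talon falls into one of $\cov(\mu) + 1$ categories: the category ``$a_i = a'$'' or one of the $\cov(\mu)$ categories ``$p_i \in T_j$''. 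With $\cov(\mu) + 2$ talons but only $\cov(\mu) + 1$ categories, the pigeonhole principle forces two talons into a common category. If they share the category $a_i = a'$, the corresponding edges meet at $a'$ and are adjacent in $H$; if they share a category $T_j$, their $B$-endpoints lie in a set of diameter $s$, hence $\mu(p_i, p_j) \leq s$, again yielding an edge in $H$. Either way the two talons are adjacent, contradicting the definition of a claw.

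With the claim in hand, I would invoke \Cref{thm:ind} with $d = \cov(\mu) + 2$, which yields a polynomial-time $((\cov(\mu)+2)/2 - \eps) = (\cov(\mu)/2 + 1 - \eps)$-approximation and finishes the proof. The one genuine obstacle is a boundary condition rather than the main combinatorics: \Cref{thm:ind} requires $d \geq 4$, so I must ensure $\cov(\mu) \geq 2$. This holds for every metric of interest---in particular for all $\ell_p$ metrics on $\R^2$, where a radius-$s$ ball contains two points at mutual distance exceeding $s$ and so cannot be covered by a single set of $\mu$-diameter $s$---and I would record this mild nondegeneracy assumption explicitly. The remaining points are routine: that sets of $\mu$-diameter $s$ force the pairwise distance bound used above, and that the matching/independent-set correspondence is unaffected by replacing $\ell_1$ with $\mu$.
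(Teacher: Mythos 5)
Your proposal is correct and follows essentially the same route as the paper: the same conflict graph, the same pigeonhole argument over the $\cov(\mu)+1$ categories to establish $(\cov(\mu)+2)$-claw-freeness, and the same appeal to \Cref{thm:ind}. Your explicit check that $\cov(\mu)\geq 2$ (so that $d\geq 4$ as required by \Cref{thm:ind}) is a small point of care that the paper leaves implicit.
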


\begin{proof}
Similarly to the proof of  \Cref{thm:two-dimensional}, we construct the conflict graph $H$ and show that it is $d$-claw-free for $d:=\cov(\mu)+2$. 
Indeed, suppose that any vertex $\{a',p'\}\in H$ is adjacent to $d$ vertices in $H$,
$\{a_1,p_1\},\ldots,\{a_d,p_d\}$.
If two of these vertices $\{a_i,p_i\},\{a_j,p_j\}$ have $a_i = a_j = a'$, then these vertices are adjacent to each other, so $\{a',p'\}$ is not the center of a $d$-claw.
Otherwise, at least $\cov(\mu)+1$ vertices $\{a_i,p_i\}$ have $a_i\neq a'$, which means that  $p_i\in \cB_\mu(p', s)$.
By the pigeonhole principle, 
at least two such points $p_i,p_j$ are contained in the same set of diameter at most $s$ from among the $\cov(\mu)$ sets covering $\cB_\mu(p', s)$. 
Hence, $\{a_i,p_i\}$ and $\{a_j,p_j\}$ are adjacent in $H$, so $\{a',p'\}$ is not the center of a $d$-claw.
\end{proof}

\Cref{thm:two-dimensional} is a special case of \Cref{thm:two-dimensional-metrics}, since $\cov(\ell_1) \leq 4$ by Equation~\eqref{eq:ball-pack}.

To obtain results for other metrics, we need upper bounds on their covering numbers. 
Consider, for example, the $\ell_{\infty}$ metric defined by
$||p-p'||_{\infty} := \max\{|p_x-p'_x|, |p_y-p'_y|\}$.
It is easy to see that $\cov(\ell_{\infty})\leq 4$ (rotate \Cref{fig:ball-pack} by $45^{\circ}$), so \Cref{thm:two-dimensional-metrics} gives a $(3-\eps)$-approximation for this metric as well.
The same holds for an asymmetric variant of $\ell_{\infty}$, defined by $\max\{|p_x-p'_x|/w_x, |p_y-p'_y|/w_y\}$, where $w_x,w_y$ are positive weights.\footnote{To motivate this metric, consider a theater where the distance between adjacent rows is larger than the distance between adjacent columns, so every pair of spectators should be distanced by at least $\lceil w_x\cdot s\rceil$ columns or $\lceil w_y\cdot s\rceil$ rows.}
For this metric, the ball is a rectangle, which can be covered by four similar rectangles scaled by $1/2$ from the original rectangle.

Consider now the Euclidean metric $\ell_2$, defined by
$||p-p'||_{2} := \sqrt{(p_x-p'_x)^2 +  (p_y-p'_y)^2}$.
We have $\cov(\ell_2)\leq 6$,
since the unit disk can be covered by six $60^{\circ}$ sectors, each of which has diameter~$1$. 
This gives a $(4-\eps)$-approximation.

In fact, one can show that every norm-based metric $\mu$ has $\cov(\mu) \leq 25$; we give the proof of this bound below. This implies a $(13.5 - \eps)$-approximation algorithm for any such metric.

\begin{lemma}
For any norm-based metric $\mu$ on $\mathbb{R}^2$, we have $\cov(\mu) \leq 25$.
\end{lemma}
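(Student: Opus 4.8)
The plan is to exploit the fact that for a norm-based metric, every ball is a scaled translate of the unit ball $\cB_\mu(0,1)$, so it suffices to cover the unit ball by a bounded number of sets each of $\mu$-diameter~$1$. By scaling and translation invariance of norms, a cover of $\cB_\mu(0,1)$ by $c$ sets of diameter~$1$ immediately yields a cover of $\cB_\mu(p,r)$ by $c$ sets of diameter~$r$ for every center $p$ and radius $r$, which is exactly what $\cov(\mu)$ demands. Thus the entire problem reduces to a single planar covering question about the unit ball of an arbitrary norm.

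The key geometric input I would use is John's theorem (or the elementary two-dimensional version): the unit ball $K := \cB_\mu(0,1)$ is a symmetric convex body in $\R^2$, so there is an invertible linear map $T$ such that $T(K)$ is sandwiched between a Euclidean disk and its $\sqrt{2}$-scaling, i.e. $D \subseteq T(K) \subseteq \sqrt{2}\, D$ for some disk $D$ of radius~$\rho$. Equivalently, after a linear change of coordinates the norm $\mu$ is within a factor of $\sqrt{2}$ of the Euclidean norm. I would carry this out in three steps: first, reduce to covering $T(K)$ in the transformed coordinates, noting that $T$ carries $\mu$-diameter to $\mu$-diameter (since $T$ is the identity on the norm once we pull the norm back through $T^{-1}$); second, cover $T(K)$ by a grid of small axis-aligned Euclidean squares fine enough that each square has $\mu$-diameter at most~$1$; third, count the squares, using $T(K) \subseteq \sqrt{2}\,D$ to bound the region and $D \subseteq T(K)$ to relate Euclidean side length to $\mu$-diameter.

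Concretely, the sandwiching gives $\tfrac{1}{\sqrt 2}\,\|x\|_2 \le \mu(x) \le \|x\|_2 / \text{(something)}$ up to the normalization of $\rho$; I would fix the scaling so that the Euclidean diameter of a set bounds its $\mu$-diameter by a factor of~$\sqrt 2$. A square of Euclidean side $\ell$ has Euclidean diameter $\ell\sqrt 2$, hence $\mu$-diameter at most $\ell\sqrt2 \cdot \sqrt2 = 2\ell$, so choosing $\ell = 1/2$ makes every such square have $\mu$-diameter at most~$1$. The transformed body $T(K)$ is contained in a Euclidean disk of radius $\sqrt2\,\rho$, where $\rho$ is calibrated by $D \subseteq T(K)$ and the requirement that $T(K)$ have $\mu$-radius~$1$; tracking these constants, $T(K)$ fits inside a square of side $2\sqrt2\,\rho$, which is tiled by $(2\sqrt2\,\rho / \ell)^2$ squares of side $\ell = 1/2$. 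Plugging in the explicit value of $\rho$ coming from John's theorem and rounding up yields a constant bound, and the delicate part is choosing the normalization and the square size so that this constant comes out to at most~$25$ rather than some larger number; I would verify the arithmetic $\lceil 2\sqrt2\,\rho/\ell\rceil^2 \le 25$ directly.

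The main obstacle is precisely this constant-tracking: John's theorem only guarantees a bounded distortion, and a naive grid cover can easily produce more than $25$ cells, so the bound $25$ is tight enough that one must be careful to (a) normalize the unit ball optimally, (b) possibly discard corner squares of the bounding square that do not meet $T(K)$, and (c) choose the grid offset to minimize the count. I expect the honest route to $25$ to use that after the John transformation $T(K)$ is inscribed in a disk of radius $\sqrt2$ times its inscribed-disk radius, giving a $5 \times 5$ grid of side-$1/2$ squares covering a disk of radius $\le \sqrt2$, i.e. a $2\sqrt2 \times 2\sqrt2$ bounding square, which splits into at most $\lceil 2\sqrt2 \cdot 2\rceil^2$ cells after rescaling—so the crux is pinning down the scaling so that this equals exactly $25 = 5^2$.
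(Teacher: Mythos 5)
Your approach is genuinely different from the paper's. The paper gives a short packing-plus-volume argument: greedily pick a maximal set $S$ of points in $\cB_\mu(p,r)$ that are pairwise at $\mu$-distance more than $r/2$; then the balls $\cB_\mu(p',r/2)$ for $p'\in S$ cover $\cB_\mu(p,r)$ and each has diameter at most $r$, while the balls $\cB_\mu(p',r/4)$ are disjoint and all lie in $\cB_\mu(p,5r/4)$, so comparing areas (using that a norm ball of radius $\rho$ has area $\rho^2 v$) gives $|S|\le (5/4)^2/(1/4)^2=25$. This needs nothing beyond the triangle inequality and homogeneity of the norm. Your route instead invokes John's theorem to put the unit ball within a factor $\sqrt2$ of a Euclidean disk and then covers by a grid of squares; the reduction to covering the unit ball once (by scaling and translation invariance) is correct, and the plan is sound. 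In fact, carried out carefully it is \emph{stronger} than the paper's bound: with the normalization $D\subseteq T(K)\subseteq\sqrt2\,D$ for the unit Euclidean disk $D$, the transformed norm $\nu$ satisfies $\nu(x)\le\|x\|_2$, so a square of side $\ell=1/\sqrt2$ has $\nu$-diameter at most $\ell\sqrt2=1$, and $T(K)$ sits inside a square of side $2\sqrt2$, which is tiled by exactly $(2\sqrt2/\ell)^2=16$ such cells. So the John-plus-grid argument yields $\cov(\mu)\le 16$, at the price of a heavier geometric input.

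The gap in your write-up is that the constant-tracking is never actually closed, and the numbers you do write are inconsistent: you charge the John factor $\sqrt2$ twice, once in the diameter bound (``$\mu$-diameter at most $\ell\sqrt2\cdot\sqrt2=2\ell$'', forcing $\ell=1/2$) and once in the containment (``contained in a disk of radius $\sqrt2\,\rho$''). These correspond to two different normalizations of $\rho$; using both simultaneously, your final count is $\lceil 2\sqrt2/(1/2)\rceil^2=36$, not $25$. To fix it, pick one normalization: either $\rho=1$ (inscribed disk is the unit disk), in which case $\nu\le\|\cdot\|_2$, the square side may be taken as $1/\sqrt2$, and the bounding square of side $2\sqrt2$ gives $16$ cells; or $\rho=1/\sqrt2$, in which case $\nu\le\sqrt2\|\cdot\|_2$, the side is $1/2$, and the bounding square of side $2$ again gives $16$ cells. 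Either way the lemma follows with room to spare, but as written the argument does not establish the bound.
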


\begin{proof}
By the triangle inequality, every $\mu$-ball of radius $r/2$ has diameter at most $r$. Therefore, it is sufficient to prove that every $\mu$-ball of radius $r$ can be covered by at most $25$ $\mu$-balls of radius $r/2$ each.
Consider the following greedy process of constructing a cover for $\cB_{\mu}(p, r)$: 
\begin{itemize}
\item
Initialize $S := \emptyset$.

\item
While $\cB_{\mu}(p, r) \nsubseteq \bigcup_{p' \in S} \cB_{\mu}(p', r/2)$, add an arbitrary point in $\cB_{\mu}(p, r) \setminus \bigcup_{p' \in S} \cB_{\mu}(p', r/2)$ to $S$.
\end{itemize}
We claim that this process must stop after at most $25$ points have been added to $S$; this immediately yields the desired cover of $\cB_\mu(p, r)$ of size at most $25$. 

Let $v$ denote the volume (i.e., area) of the unit $\mu$-ball.
Due to the construction, 
every new point added to $S$ is at $\mu$-distance larger than $r/2$ from each point already in $S$. Therefore, by the triangle inequality, the balls $\cB_{\mu}(p', r/4)$ for different $p' \in S$ are disjoint. 
As a result, we have\footnote{Here we use the property of norm-based metric spaces in $\R^2$ that $\Vol(\cB_{\mu}(p, r)) = r^2 \cdot v$.}
\begin{align} \label{eq:volume}
\Vol\left(\bigcup_{p' \in S} \cB_{\mu}(p', r/4)\right) = \sum_{p' \in S} \Vol(\cB_{\mu}(p', r/4)) = |S| \cdot v (r/4)^2.
\end{align}
On the other hand, again by the triangle inequality,
we have $\cB_{\mu}(p', r/4) \subseteq \cB_{\mu}(p, 5r/4)$ for all $p' \in S$. 
It follows that
\begin{align*}
\Vol\left(\bigcup_{p' \in S} \cB_{\mu}(p', r/4)\right) \leq \Vol(\cB_{\mu}(p, 5r/4)) = (5r/4)^2 \cdot v = 25r^2 v/16.
\end{align*}
Combining this with \eqref{eq:volume}, we get $|S|\leq 25$, as claimed.
\end{proof}

\begin{theorem}
There exists a constant $\eps > 0$ such that, for any fixed $s > 0$ and any norm-based metric~$\mu$, in the two-dimensional setting, there exists a polynomial-time $(13.5 - \eps)$-approximation algorithm for computing a maximum matching of~$G$ with separation parameter $s$ in the metric~$\mu$.    
\end{theorem}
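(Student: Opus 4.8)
The plan is to treat this as a short corollary of two facts already established: the uniform covering-number bound $\cov(\mu) \leq 25$ from the preceding lemma, and the generic reduction of \Cref{thm:two-dimensional-metrics}, which together leave essentially nothing to do beyond plugging in constants. So I expect the argument to be a few lines of bookkeeping rather than new mathematics.

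First I would recall the construction used in the proof of \Cref{thm:two-dimensional-metrics}: form the conflict graph $H = (V_H, E_H)$ with $V_H = E$, joining $\{a', p'\}$ and $\{a'', p''\}$ exactly when $a' = a''$ or $\mu(p', p'') \leq s$, so that independent sets of $H$ correspond bijectively to matchings of $G$ with separation $s$ in the metric $\mu$. That proof shows $H$ is $(\cov(\mu) + 2)$-claw-free. I would then feed in the lemma to obtain the metric-independent bound $\cov(\mu) + 2 \leq 27$, valid for every norm-based metric simultaneously.

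The one point worth spelling out is the monotonicity of the claw-free property: any induced $(d+1)$-claw restricts, on its center together with any $d$ of its talons, to an induced $d$-claw, so a $d$-claw-free graph is automatically $d'$-claw-free for every $d' \geq d$. Consequently $H$ is $27$-claw-free, and I can fix the single value $d = 27$ for all norm-based metrics. Applying the Maximum Independent Set algorithm of \Cref{thm:ind} with $d = 27$ (note $27 \geq 4$) then yields in polynomial time a $(27/2 - \eps) = (13.5 - \eps)$-approximate independent set in $H$, which translates back through the bijection to a $(13.5 - \eps)$-approximate matching of $G$ with separation $s$.

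I do not anticipate a genuine obstacle here, since both ingredients are in hand. The only subtlety to be careful about is exactly the uniformity just described: because the lemma gives only an upper bound on $\cov(\mu)$ (which we need not be able to compute exactly for a given $\mu$), it is important to run the algorithm with the fixed bound $d = 27$ rather than with $d = \cov(\mu) + 2$, so that the guarantee $(13.5 - \eps)$ and the constant $\eps$ from \Cref{thm:ind} are genuinely independent of the particular norm-based metric.
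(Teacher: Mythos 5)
Your proposal is correct and matches the paper's own (implicit) argument exactly: the paper derives this theorem by plugging the bound $\cov(\mu)\leq 25$ from the preceding lemma into the $(\cov(\mu)/2+1-\eps)$-approximation of \Cref{thm:two-dimensional-metrics}, which is precisely your computation $(25+2)/2 = 13.5$. Your added remark about fixing $d=27$ uniformly (via monotonicity of claw-freeness) so that the guarantee is independent of the particular norm is a sensible clarification but not a departure from the paper's route.
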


\section{Future Directions}

Our conflict graph $H$ in the proofs of Theorems~\ref{thm:one-dim-better-than-two}, \ref{thm:two-dimensional}, and \ref{thm:two-dimensional-metrics} is more restricted than a general $d$-claw-free graph, so it might be possible to improve the approximation ratios if we perform more tailored analyses.

Going back to the motivating application of seating spectators in a theater (see \Cref{sec:intro}), one can consider an online setting in which each spectator (or group of spectators) must be seated before the next spectator (or group) arrives.

Matching with separation is a special case of an \emph{independent system of representatives (ISR)} \citep{aharoni2007independent,haxell2011forming}.
Given a graph $H=(V_H,E_H)$ and $n$ sets of vertices $V_1,\ldots,V_n \subseteq V_H$, an ISR is a selection of at most one vertex from each $V_i$ such that the selected vertices form an independent set in $H$.
A matching with separation is in fact an ISR in the conflict graph $H$ induced by the separation constraints (see the proof of \Cref{thm:one-dim-better-than-two}), where the vertex sets $V_1,\ldots,V_n$ are defined by $V_i  := \{\{a',b'\}\in E \mid a'=a_i\}$, that is, the set of edges adjacent to vertex $a_i\in A$.
While the mathematics literature on ISR-s focuses on sufficient conditions for existence, our results provide approximation algorithms for finding an ISR. 
It is interesting whether similar techniques can also be used to derive approximation algorithms for finding ISR-s in more general settings.

\subsection*{Acknowledgments}
This work was partially supported by the Israel Science Foundation under grant number 712/20, by the Singapore Ministry of Education under grant number MOE-T2EP20221-0001, and by an NUS Start-up Grant.
We are grateful to J\'{e}r\^{o}me Lang, Joe Briggs, 
Anton Petrunin, Saul Rodriguez Martin, and Noam D. Elkies for their helpful ideas.

\bibliographystyle{plainnat}
\bibliography{main}

\begin{thebibliography}{15}
\providecommand{\natexlab}[1]{#1}
\providecommand{\url}[1]{\texttt{#1}}
\expandafter\ifx\csname urlstyle\endcsname\relax
  \providecommand{\doi}[1]{doi: #1}\else
  \providecommand{\doi}{doi: \begingroup \urlstyle{rm}\Url}\fi

\bibitem[Aharoni et~al.(2007)Aharoni, Berger, and Ziv]{aharoni2007independent}
Ron Aharoni, Eli Berger, and Ran Ziv.
\newblock Independent systems of representatives in weighted graphs.
\newblock \emph{Combinatorica}, 27\penalty0 (3):\penalty0 253--267, 2007.

\bibitem[Baptiste(1999)]{baptiste1999polynomial}
Philippe Baptiste.
\newblock Polynomial time algorithms for minimizing the weighted number of late
  jobs on a single machine with equal processing times.
\newblock \emph{Journal of Scheduling}, 2\penalty0 (6):\penalty0 245--252,
  1999.

\bibitem[Chrobak et~al.(2021)Chrobak, D\"{u}rr, Jawor, Kowalik, and
  Kurowski]{ChrobakDuJa21}
Marek Chrobak, Christoph D\"{u}rr, Wojciech Jawor, {\L}ukasz Kowalik, and
  Maciej Kurowski.
\newblock A note on scheduling equal-length jobs to maximize throughput.
\newblock \emph{CoRR}, abs/cs/0410046v2, 2021.

\bibitem[Cormen et~al.(2009)Cormen, Leiserson, Rivest, and Stein]{CormenLeRi09}
Thomas~H. Cormen, Charles~E. Leiserson, Ronald~L. Rivest, and Clifford Stein.
\newblock \emph{Introduction to Algorithms}.
\newblock {MIT} {P}ress, 3rd edition, 2009.

\bibitem[Cygan(2013)]{Cygan13}
Marek Cygan.
\newblock Improved approximation for 3-dimensional matching via bounded
  pathwidth local search.
\newblock In \emph{Proceedings of the 54th Annual {IEEE} Symposium on
  Foundations of Computer Science (FOCS)}, pages 509--518, 2013.

\bibitem[Cygan et~al.(2013)Cygan, Grandoni, and Mastrolilli]{cygan2013sell}
Marek Cygan, Fabrizio Grandoni, and Monaldo Mastrolilli.
\newblock How to sell hyperedges: the hypermatching assignment problem.
\newblock In \emph{Proceedings of the 24th Annual ACM-SIAM Symposium on
  Discrete Algorithms (SODA)}, pages 342--351, 2013.

\bibitem[Elkind et~al.(2021{\natexlab{a}})Elkind, Segal-Halevi, and
  Suksompong]{ElkindSeSu21-Graph}
Edith Elkind, Erel Segal-Halevi, and Warut Suksompong.
\newblock Graphical cake cutting via maximin share.
\newblock In \emph{Proceedings of the 30th International Joint Conference on
  Artificial Intelligence (IJCAI)}, pages 161--167, 2021{\natexlab{a}}.

\bibitem[Elkind et~al.(2021{\natexlab{b}})Elkind, Segal-Halevi, and
  Suksompong]{ElkindSeSu21-Land}
Edith Elkind, Erel Segal-Halevi, and Warut Suksompong.
\newblock Keep your distance: Land division with separation.
\newblock In \emph{Proceedings of the 30th International Joint Conference on
  Artificial Intelligence (IJCAI)}, pages 168--174, 2021{\natexlab{b}}.

\bibitem[Elkind et~al.(2022)Elkind, Segal-Halevi, and Suksompong]{ElkindSeSu22}
Edith Elkind, Erel Segal-Halevi, and Warut Suksompong.
\newblock Mind the gap: Cake cutting with separation.
\newblock \emph{Artificial Intelligence}, 313:\penalty0 103783, 2022.

\bibitem[F{\"u}rer and Yu(2014)]{furer2014approximating}
Martin F{\"u}rer and Huiwen Yu.
\newblock Approximating the $k$-set packing problem by local improvements.
\newblock In \emph{Proceedings of the 3rd International Symposium on
  Combinatorial Optimization (ISCO)}, pages 408--420, 2014.

\bibitem[Gupta et~al.(2019)Gupta, Lee, Li, Manurangsi, and
  Wlodarczyk]{GuptaLLM019}
Anupam Gupta, Euiwoong Lee, Jason Li, Pasin Manurangsi, and Michal Wlodarczyk.
\newblock Losing treewidth by separating subsets.
\newblock In \emph{Proceedings of the 30th Annual {ACM-SIAM} Symposium on
  Discrete Algorithms (SODA)}, pages 1731--1749, 2019.

\bibitem[Haxell(2011)]{haxell2011forming}
Penny Haxell.
\newblock On forming committees.
\newblock \emph{The American Mathematical Monthly}, 118\penalty0 (9):\penalty0
  777--788, 2011.

\bibitem[Hopcroft and Karp(1973)]{HK73}
John~E. Hopcroft and Richard~M. Karp.
\newblock An {$n^{5/2}$} algorithm for maximum matchings in bipartite graphs.
\newblock \emph{SIAM Journal on Computing}, 2\penalty0 (4):\penalty0 225--231,
  1973.

\bibitem[Neuwohner(2021)]{Neuwohner21}
Meike Neuwohner.
\newblock An improved approximation algorithm for the maximum weight
  independent set problem in $d$-claw free graphs.
\newblock In \emph{Proceedings of the 38th International Symposium on
  Theoretical Aspects of Computer Science (STACS)}, pages 53:1--53:20, 2021.

\bibitem[Zuckerman(2007)]{Zuckerman07}
David Zuckerman.
\newblock Linear degree extractors and the inapproximability of max clique and
  chromatic number.
\newblock \emph{Theory of Computing}, 3:\penalty0 103--128, 2007.

\end{thebibliography}

\end{document}